  \providecommand\BibTeX{{%
    \normalfont B\kern-0.5em{\scshape i\kern-0.25em b}\kern-0.8em\TeX}}}
\begin{document}

\title{Understanding the Interplay between Privacy and Robustness in Federated Learning}

\author{Yaowei Han}
\affiliation{%
  \institution{Department of Social Informatics, Kyoto University}
  \city{Kyoto}
  \country{Japan}}
\email{yaowei@db.soc.i.kyoto-u.ac.jp}

\author{Yang Cao}
\affiliation{%
  \institution{Department of Social Informatics, Kyoto University}
  \city{Kyoto}
  \country{Japan}}
\email{yang@i.kyoto-u.ac.jp}

\author{Masatoshi Yoshikawa}
\affiliation{%
  \institution{Department of Social Informatics, Kyoto University}
  \city{Kyoto}
  \country{Japan}}
\email{yoshikawa@i.kyoto-u.ac.jp}

\renewcommand{\shortauthors}{Han and Cao, et al.}

\begin{abstract}
Federated Learning (FL) is emerging as a promising paradigm of privacy-preserving machine learning, which trains an algorithm across multiple clients without exchanging their data samples.
Recent works highlighted several privacy and robustness weaknesses in FL and addressed these concerns using local differential privacy (LDP) and some well-studied methods used in conventional ML, separately.
However, it is still not clear how LDP affects adversarial robustness in FL.
To fill this gap, this work attempts to develop a comprehensive understanding of the effects of LDP on adversarial robustness in FL.
Clarifying the interplay is significant since this is the first step towards a principled design of private and robust FL systems.
We certify that local differential privacy has both positive and negative effects on adversarial robustness using theoretical analysis and empirical verification.
\end{abstract}


\begin{CCSXML}
<ccs2012>
   <concept>
       <concept_id>10003033.10003083.10011739</concept_id>
       <concept_desc>Networks~Network privacy and anonymity</concept_desc>
       <concept_significance>500</concept_significance>
       </concept>
 </ccs2012>
\end{CCSXML}

\ccsdesc[500]{Networks~Network privacy and anonymity}

\keywords{federated learning, local differential privacy, adversarial robustness}


\maketitle

\section{Introduction}
Federated Learning (FL) \cite{mcmahan2017communication, kairouz2019advances, yang2019federated} is a relatively new and promising machine learning approach, emerging as a new paradigm of privacy-preserving machine learning.
It trains an algorithm across multiple clients (e.g., decentralized edge devices or servers holding local data samples) without exchanging their data samples.
In an FL system, data owners (participants) do not need to share raw data with the server.
Instead, participants jointly train an ML model by executing local training algorithms on their own private local data and only sharing model parameters with the parameter server.
This parameter server serves as a central aggregator to appropriately aggregate the local parameter updates and then share the aggregated updates with every participant.

While FL allows participants to keep their raw data local, recent works highlight and address the \textit{privacy} and \textit{robustness} concerns in FL.
To prevent potential privacy leakage from local parameter updates \cite{papernot2016semi}, local differential privacy (LDP) \cite{evfimievski2003limiting} has been adopted as a strong privacy guarantee in FL \cite{sun2020ldp,liuFedSelFederatedSGD2020, seifWirelessFederatedLearning2020a, truexLDPFedFederatedLearning2020} by locally adding perturbation to the updates.
Local differential privacy is a state-of-art paradigm facilitating secure analysis over sensitive data because of its strong assumption on adversary's background knowledge and smart setting of the privacy budget.
On the other hand, to avoid the adversarial examples \cite{yuanAdversarialExamplesAttacks2019, goodfellow2014explaining} and improve the adversarial robustness \cite{ dalvi2004adversarial}, researchers induce well-studied methods used in conventional ML such as bounding the norm of gradient updates or adding Gaussian noises \cite{sun2019can} for robust FL.

However, it is still not clear how LDP affects adversarial robustness in FL.
A few studies \cite{lecuyer2019certified, naseri2020toward} show that differential privacy may have positive effects on preventing adversarial examples in different settings.
Lecuyer et al. \cite{lecuyer2019certified} investigate a certified defense to adversarial examples in deep learning via a variant of differential privacy called PixelDP.
Naseri et al. \cite{naseri2020toward} present an empirical evaluation for the effect of (local) differential privacy on adversarial robustness in FL. 
However, they do not have a theoretical basis to explain the effects of LDP on robustness in FL.
More importantly, all existing studies only indicate that the higher level of perturbation on training samples could lead to higher adversarial robustness, which we call \textit{positive effect} in this work;
it is intriguing to study whether there is also a \textit{negative effect} of LDP on adversarial robustness, i.e., the robustness may decrease along with the higher level of privacy.

To fill this gap, we attempt to develop a comprehensive understanding of the effects of local differential privacy on adversarial robustness in federated learning.
Specifically, we will show that a higher level of LDP (lower $\epsilon$) may not always lead to a higher level of adversarial robustness.
Given the increasing attention of LDP-based FL \cite{sun2020ldp,liuFedSelFederatedSGD2020, seifWirelessFederatedLearning2020a, truexLDPFedFederatedLearning2020},
clarifying the interplay between LDP and adversarial robustness in FL is significant since this is the first step towards a principled design of private and robust FL systems.  


Our contributions in this work are three-fold.
\begin{itemize} [leftmargin=*, topsep=0pt]
\item To fill in the theoretical gap that whether local differential privacy benefits adversarial robustness, we show the theoretical evidence about the connection between local differential privacy and our proposed adversarial robustness called {$E_{e^\epsilon}$ - adversarial robustness} (Section 3).
\item We then study the mixed effects of local differential privacy on preventing adversarial examples by clarifying the connection between our proposed robustness definition and the ones defined in the literature.
We qualitatively show that LDP may also bring a negative effect on robustness, i.e., a higher level of privacy (lower $\epsilon$) can lead to lower adversarial robustness (Section 4).
\item We conduct experiments to verify how different privacy parameters of LDP affect the adversarial robustness in FL, which is in line with our theoretical analysis (Section 5).
\end{itemize}

\section{PROBLEM SETTING} \label{sec:problem_setting}

This work wants to learn about the effect of local differential privacy on adversarial robustness in federated learning.
Fig.\ref{fig:problem} shows the federated learning scenario we focus on.
The participants and the server want to train a model collaboratively.
In the training phase, due to some privacy concerns, the participants will download the global model, compute the local model, update local updates.
We suppose that each participant obeys the rule that adding local differential privacy using the same $\epsilon$ to its local data.
In the inference phase, the adversary tries to use adversarial examples to attack the model. We constrain the adversarial examples to be norm-bounded.

Overall, our work aims to

\begin{enumerate}
    \item Theoretically dig the theoretical evidence about why LDP brings the positive effect.
    \item Theoretically find the overall effect of local differential privacy on adversarial robustness.
    \item Empirically see the change of adversarial robustness on different settings of $\epsilon$.
\end{enumerate}
 
Here we consider a classification task in federated learning with data $(x, y) \in \mathcal{X} \times \mathcal{Y}(\mathcal{Y} = \{1,...,C\}$) from a distribution D.
We assume the existence
of a \textit{labeling oracle} $\mathcal{O} : \mathcal{X} \rightarrow \mathcal{Y} \cup \{\bot\}$ that maps any input in $\mathcal{X}$ to its true label, or to the “un-labelable”. Take a digit classification task for example, the oracle $\mathcal{O}$ corresponds to human labeling of any image as a digit.
Note that for $(x, y) \sim D$, we always have $y = \mathcal{O}(x)$.
The goal is to learn a classifier $f$: $\mathcal{X} \rightarrow \mathcal{Y}$ that agrees with the oracle’s labels.

\begin{figure}[tb]
  \centering
  \includegraphics[width=0.45\textwidth]{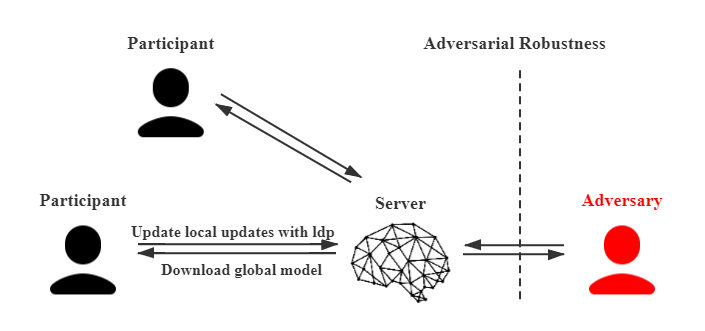}
  \caption{Problem setting.}
  \label{fig:problem}
\end{figure}

\section{Theoretical Evidence of Positive Effect of LDP} \label{sec:evi}

To find the theoretical evidence of why LDP brings a positive effect to adversarial robustness, we first find the connection between local differential privacy definition and adversarial robustness.
We notice that randomization is the common keyword of both local differential privacy and adversarial robustness.
While the core technology of local differential privacy is to inject noises based on the definition, randomization has also proven effective in adversarial defense strategies \cite{liu2018towards}.
The technologies used are to inject random noise (both during training and inference phases) inside the machine learning network architecture,i.e.at a given layer of the network.
Thus, the intuition is that there would be a connection between noises we add in local differential privacy and randomization we apply for generating the adversarial examples.

We find that \cite{pinot2019unified} highlights some links between Renyi differential privacy \cite{mironov2017renyi} and generalized adversarial robustness \cite{pinot2019unified}.
This presents a unified view towards both but doesn't further discuss the practical meaning. It only cares about the generalized differential privacy and generalized adversarial robustness which cannot be used when we discuss the effect of LDP.
We then want to find a type of adversarial robustness definition, which we can see as a unified view of local differential privacy.
We first give the definition of local differential privacy.

\begin{definition}
Local differential privacy \cite{evfimievski2003limiting}: A randomized algorithm (mechanism) $\mathcal{M}: \mathcal{X} \rightarrow \mathcal{Y}$ is $\epsilon$-locally differentially private if for any $x, x' \in \mathcal{X}$ and $v \in \mathcal{Y}$, we have
\begin{equation}
Pr[\mathcal{M}(x) = v] \leq e^\epsilon Pr[\mathcal{M}(x') = v]
\end{equation}
where lower $\epsilon$ means higher privacy degree.
\end{definition}

According to the definition of local differential privacy, we propose a self-defined adversarial robustness named $E_{e^\epsilon}$ - adversarial robustness.

\begin{definition}
$E_{e^\epsilon}$ - adversarial robustness. A randomized classifier M is said to be $E_{e^\epsilon}$ - robust if $P_{x\sim D_\mathcal{X}}[\exists x' \in B(x, \alpha)$ $s.t$ $E_\lambda(\mathcal{M}(x'),$ $\mathcal{M}(x))$ $> 0] = 0$ where $\lambda = e^\epsilon$ and $B(x, \alpha) = \{x' \in \mathcal{X}$ $s.t$ $d_\mathcal{X}(x, x') \leq \alpha$, $\alpha = \infty\}$.
\end{definition}

Then we can see the connection between local differential privacy and $E_{e^\epsilon}$ - adversarial robustness.

\begin{theorem}
An algorithm $\mathcal{M}$ is $E_{e^\epsilon}$ - robust if and only if $\mathcal{M}$ is $\epsilon$ - local differential private.
\end{theorem}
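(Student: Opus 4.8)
The plan is to prove both implications by unfolding the two definitions until they reduce to the same pointwise inequality, so that the ``iff'' becomes essentially a change of notation. The crucial object is the divergence $E_\lambda$ of \cite{pinot2019unified}: for two output distributions $\mu,\nu$ over the discrete label space $\mathcal{Y}$ and $\lambda=e^\epsilon\ge 1$, $E_\lambda(\mu,\nu)$ is the $\lambda$-advantage $\sup_{A\subseteq\mathcal{Y}}(\mu(A)-\lambda\,\nu(A))$. The first step I would carry out is the elementary remark that this supremum is attained on $A^\star=\{v:\mu(\{v\})>\lambda\,\nu(\{v\})\}$, so that $E_\lambda(\mu,\nu)>0$ if and only if there is some $v\in\mathcal{Y}$ with $\mu(\{v\})>\lambda\,\nu(\{v\})$, while $E_\lambda(\mu,\nu)\le 0$ if and only if $\mu(\{v\})\le\lambda\,\nu(\{v\})$ for every $v$. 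Taking $\mu=\mathcal{M}(x')$, $\nu=\mathcal{M}(x)$, $\lambda=e^\epsilon$, and using $\alpha=\infty$ so that $B(x,\alpha)=\mathcal{X}$, the event inside the definition of $E_{e^\epsilon}$-robustness, ``$\exists\,x'\in B(x,\alpha)$ with $E_{e^\epsilon}(\mathcal{M}(x'),\mathcal{M}(x))>0$'', is literally ``$\exists\,x'\in\mathcal{X}$, $\exists\,v\in\mathcal{Y}$ with $Pr[\mathcal{M}(x')=v]>e^\epsilon\,Pr[\mathcal{M}(x)=v]$''.

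For the direction ``$\epsilon$-LDP $\Rightarrow$ $E_{e^\epsilon}$-robust'' I would start from $Pr[\mathcal{M}(x)=v]\le e^\epsilon\,Pr[\mathcal{M}(x')=v]$ for all $x,x'\in\mathcal{X}$ and $v\in\mathcal{Y}$. Because this axiom is already quantified over all ordered pairs, renaming $x\leftrightarrow x'$ shows it is equivalent to $Pr[\mathcal{M}(x')=v]\le e^\epsilon\,Pr[\mathcal{M}(x)=v]$ for all $x,x',v$; by the first step this says $E_{e^\epsilon}(\mathcal{M}(x'),\mathcal{M}(x))\le 0$ for \emph{every} pair, so for every $x$ the event above is empty and has $D_\mathcal{X}$-probability $0$, which is exactly $E_{e^\epsilon}$-robustness.

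For the converse I would use that $E_{e^\epsilon}$-robustness supplies a set $S\subseteq\mathcal{X}$ with $P_{x\sim D_\mathcal{X}}[x\in S]=1$ such that, for every $x\in S$ and every $x'\in\mathcal{X}$, $E_{e^\epsilon}(\mathcal{M}(x'),\mathcal{M}(x))\le 0$, i.e., $Pr[\mathcal{M}(x')=v]\le e^\epsilon\,Pr[\mathcal{M}(x)=v]$ for all $v$. Applying this to any two points $x_1,x_2\in S$, once with $(x,x')=(x_1,x_2)$ and once with $(x,x')=(x_2,x_1)$, yields the two-sided bound $Pr[\mathcal{M}(x_1)=v]\le e^\epsilon\,Pr[\mathcal{M}(x_2)=v]$, which is precisely $\epsilon$-LDP restricted to $S$; since $S$ carries all the probability mass of $D_\mathcal{X}$ — and, under the reading implicit in the statement that the probability-$0$ condition must hold at every point of the data domain, $S=\mathcal{X}$ — this is $\epsilon$-LDP.

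The only place the proof is more than a definitional translation, and hence the step I expect to be the real obstacle, is this last measure-theoretic matching: literally, $E_{e^\epsilon}$-robustness constrains only those pairs $(x,x')$ whose first coordinate avoids a $D_\mathcal{X}$-null set, whereas $\epsilon$-LDP quantifies over all of $\mathcal{X}$, so the equivalence is clean exactly when $D_\mathcal{X}$ has full support and the probability-$0$ requirement is read pointwise on that support. I would make this convention explicit at the outset; everything else follows from the first step with no real computation.
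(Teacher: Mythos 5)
Your proposal is correct and follows essentially the same route as the paper: both proofs unfold the two definitions to the common pointwise inequality $Pr[\mathcal{M}(x)=v]\le e^\epsilon Pr[\mathcal{M}(x')=v]$ via the hockey-stick divergence $E_{e^\epsilon}$ and conclude the equivalence by inspection. Your version is in fact more careful than the paper's on two points it glosses over — the correct set-based (supremum) form of $E_\lambda$ rather than a single-ratio expression, and the mismatch between the almost-sure quantification over $x\sim D_\mathcal{X}$ in the robustness definition versus the universal quantification over $\mathcal{X}$ in LDP, which you rightly resolve by an explicit full-support convention.
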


\begin{proof}
We can rewrite the definition of local differential privacy as follows.

Let $\epsilon > 0$, $(\mathcal{X}, d_\mathcal{X})$ an arbitrary (input) metric space, and $\mathcal{Y}$ the output space. A probabilistic mapping $\mathcal{M}$ from $\mathcal{X}$ to $\mathcal{Y}$ is called $\epsilon$ - local differential private if for any $x, x'$, one has 
\begin{equation}
E_\lambda(\mathcal{M}(x), \mathcal{M}(x'))  = 0   
\end{equation}
where $\lambda = e^\epsilon$. $E_\lambda$ here is $E_\lambda$-divergence (hockey-stick divergence) which is a special type of $f$- divergence with $f_\lambda(t)=(t-\lambda)^+$.

We can prove the definition is a equipment of the original definition by:
\begin{equation}
\begin{split}
E_{e^\epsilon}(\mathcal{M}(x), \mathcal{M}(x')) &= 0 \\
(\frac{Pr[\mathcal{M}(x) = v]}{Pr[\mathcal{M}(x') = v]} - e^\epsilon)^+ &= 0 (v \in \mathcal{Y})\\
\frac{Pr[\mathcal{M}(x) = v]}{Pr[\mathcal{M}(x') = v]} &\leq e^\epsilon
\end{split}
\end{equation}

Comparing the definition of above local differential privacy and $E_{e^\epsilon}$ - adversarial robustness, we can get the conclusion that an algorithm $\mathcal{M}$ is $E_{e^\epsilon}$ - robust if and only if $\mathcal{M}$ is $\epsilon$ - local differential private.
\end{proof}

This conclusion indicates that applying local differential privacy improves adversarial robustness to some extent.

\section{THE EFFECT OF LOCAL DIFFERENTIAL PRIVACY ON ADVERSARIAL ROBUSTNESS} \label{sec:effect}

Section \ref{sec:evi} already shows that local differential privacy and $E_{e^\epsilon}$ - adversarial robustness are equivalent.
However, there is a gap between $E_{e^\epsilon}$ - adversarial robustness and adversarial robustness in the literature.
This section will firstly clarify the comprehensive definition of norm-bounded adversarial robustness we focus on and then qualitatively study how local differential privacy affects the norm-bounded adversarial robustness.
We then find that besides positive effects, local differential privacy can also bring negative effects on norm-bounded adversarial robustness.

\subsection{Definition of Adversarial Robustness}

The definition of adversarial robustness depends on the method of generating adversarial examples.
At its broadest, the definition of an adversarial example consists in any adversarial failure induced in a classifier \cite{goodfellow2014explaining}.
Tramèr et al. \cite{tramer2020fundamental} summary two types of adversarial examples: sensitivity adversarial examples and invariance adversarial examples.

Our work focuses on adversarial robustness defined with norm-bounded adversarial examples, which are generally used for evaluating adversarial robustness.
Norm-bounded adversarial examples are to constrain the amount of change an attacker is allowed to make to the input. The change of input is measured by the $p$-norm of the change denoted by $\lVert x^* - x \rVert_p$. Thus, we get the definition of both norm-bounded sensitivity adversarial examples and norm-bounded invariance adversarial examples.

\textbf{Definition 1} (Norm-bounded Sensitivity Adversarial Examples).
Given a classifier $f$ and a correctly classified input $(x, y) \sim D$ (i.e.,
$\mathcal{O}(x) = f(x) = y$), an "$\alpha$-bounded sensitivity adversarial
example is an input $x^* \in \mathcal{X}$ such that:
\begin{enumerate}
    \item $f(x^*) \neq f(x)$.
    \item $\lVert x^* - x \rVert_p \leq \alpha$.
\end{enumerate}

\textbf{Definition 2} (Norm-bounded Invariance Adversarial Examples).
Given a classifier $f$ and a correctly classified input $(x, y) \sim D$ (i.e.,
$\mathcal{O}(x) = f(x) = y$), an "$\alpha$-bounded invariance adversarial
example is an input $x^* \in \mathcal{X}$ such that:
\begin{enumerate}
    \item $f(x^*) = f(x)$.
    \item $\mathcal{O}(x^*) \neq \mathcal{O}(x)$, and $\mathcal{O}(x^*) != \bot$
    \item $\lVert x^* - x \rVert_p \leq \alpha$.
\end{enumerate}

Thus, we say that a model $f$ is robust to $p$-norm attacks on a given input $x$ if for all $x^* s.t. \lVert x^* - x \rVert_p \leq \alpha$, we have
\begin{enumerate}
    \item $f(x) = \mathcal{O}(x) = y$.
    \item $f(x^*) = \mathcal{O}(x^*) = y$.
\end{enumerate}

\subsection{The Effect of Local Differential Privacy on Adversarial Robustness} \label{sec:eff}

In this section, we tend to qualitatively and overall study how local differential privacy affects the norm-bounded adversarial robustness we focus on.

Combining with the result of the section \ref{sec:evi}, we can see that local differential privacy is equivalent to $E_{e^\epsilon}$ - adversarial robustness using a unified view. We then consider the exact effect of local differential privacy on norm-bounded adversarial robustness. That is, we will discuss the connection and difference between $E_{e^\epsilon}$ - adversarial robustness and norm-bounded adversarial robustness towards sensitivity adversarial examples and invariance adversarial examples.

Observing the definition of $E_{e^\epsilon}$ - adversarial robustness, we can see the biggest difference with norm-bounded adversarial robustness is the value of $\alpha$. Norm-bounded adversarial robustness uses $\alpha$ to constrain the change of input. $E_{e^\epsilon}$ - adversarial robustness here instead uses the parameter $\epsilon$ to do the similar work. We can see that $\epsilon$ is also to bound the difference of $\mathcal{M}(x)$ and $\mathcal{M}(x')$.
That is, the smaller the $\epsilon$ is, the smaller the distribution divergence of inputs is.
We then seek how the $\epsilon$ here in the $E_{e^\epsilon}$ - adversarial robustness affects the norm-bounded adversarial robustness.
Note that $\epsilon$ is actually induced from the definition from LDP, we return to the mechanism of local differential privacy. We usually add some noises to make our mechanism satisfy the definition. The extent of noising (roughly the same with the $\alpha$ we use in adversarial examples) is in inverse proportion to the value of $\epsilon$. In conclusion, we have:

\begin{quote}
    The effect of $\epsilon$ in $E_{e^\epsilon}$ - adversarial robustness is inverse proportion to the effect of $\alpha$ in norm-bounded adversarial robustness.
\end{quote}

\begin{figure}[tb]
  \centering
  \includegraphics[width=0.5\textwidth]{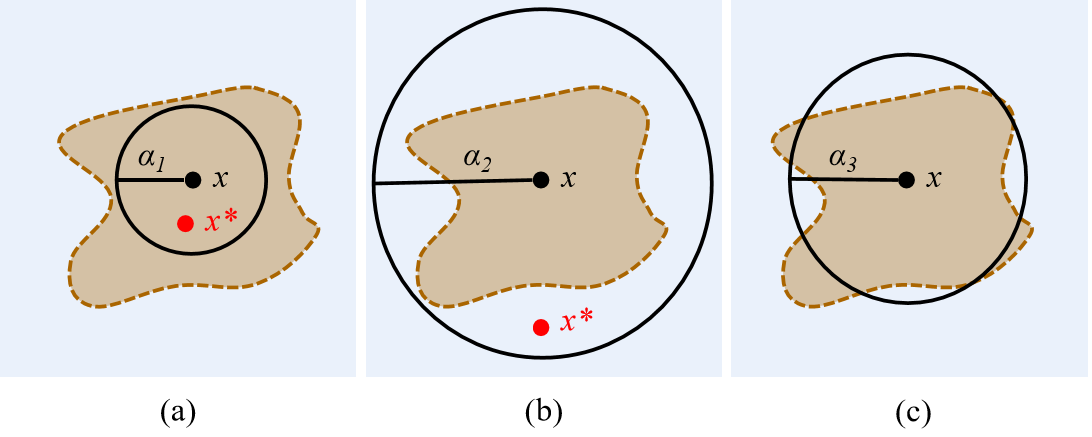}
  \caption{Generate adversarial examples with different settings of p-norm ball radius.}
  \label{fig:adversarial}
\end{figure}

For knowing the effect of $\alpha$ in norm-bounded adversarial robustness, we can mock the generation of adversarial examples.
Fig.\ref{fig:adversarial} shows three possibilities. The input space is (ground-truth) classified into the solid brown region and the blue region. The brown dotted line is the decision boundary. Given a data point (the central black point) $x$, we know its true classification and want to use it to generate some adversarial examples. Due to the norm used to define “small” adversarial perturbations is misaligned with the labeling oracle $\mathcal{O}$, we have (as show in Fig.\ref{fig:adversarial} - a, b):
\begin{equation*}
    \alpha_1 := \min{\lVert \Delta \rVert : \mathcal{O}(x + \Delta) \notin \{y, \bot\} }
\end{equation*}
\begin{equation*}
    \alpha_2 := \max{\lVert \Delta \rVert : \mathcal{O}(x + \Delta) \in \{y, \bot\} }
\end{equation*}
Then the adversarial examples with $x^*$ $s.t. \lVert x^* - x \rVert \leq \alpha_1$ we generate broaden the oracle area and can be used to strengthen the adversarial robustness. Thus, with $\alpha$ $s.t.$ $0 \leq \alpha \leq \alpha_1$ increases, the bigger oracle area we broaden, the higher adversarial robustness. In extreme cases with $\alpha = 0$, the solid brown area except $x$ can be a potential successful attack. Similarly, the adversarial examples with $x^*$ $s.t. \lVert x^* - x \rVert \geq \alpha_2$ we generate induces overly - robustness due to the presence of invariance adversarial examples \cite{tramer2020fundamental}. Thus, with $\alpha$ $s.t.$ $\alpha \geq \alpha_2$ increases, the lower adversarial robustness. In extreme cases with $\alpha = \infty$ which is definitely impractical, the whole dataset will be corrupted.
Intuitively, there should be a balance area that can benefit the adversarial robustness the best when we choose radiuses like $\alpha_3$ $s.t.$ $\alpha_1 \leq \alpha_3 \leq \alpha_2$.
To be concluded, as the radius of $p$-norm ball $\alpha$ increases, the adversarial robustness will first increase, access the balance area, and then decrease.

Based on the above relationship between $\alpha$ and $\epsilon$ we can have the conclusion which is:

\begin{quote}
    While privacy degree increases with $\epsilon$ decreasing, adversarial robustness firstly increases, access the balance area, then decreases.
\end{quote}

\section{EXPERIMENTS AND EVALUATION} \label{sec:exp}

Section \ref{sec:effect} gives the intuition about the relationship between local differential privacy and adversarial robustness.
To verify this intuition, we implement an experimental test to see the adversarial robustness in different privacy budget settings.

\subsection{Experimental Setup}

We experiment with different settings of $\epsilon$ in FL and observe the change of norm-bounded adversarial robustness on different $\epsilon$.

We use two datasets for our experiments:
1) MNIST \cite{lecun1998gradient}, to ease comparisons, and 2) CIFAR10 \cite{hope2017learning}, to extend the representativeness of our evaluation.
We use the lightweight CNN model \cite{he2016deep} for training.
All experiments use PyTorch \cite{paszke2017automatic}.
Our source code is available in Github \footnote{https://github.com/iris0305/privacy\_robust}.

\subsection{Adversarial Robustness Measurements to Adversarial Example}

Goodfellow et al. \cite{goodfellow2014explaining} proposed an efficient single iteration method, using backpropagation, to compute an $l_{\infty}$ bounded adversarial perturbation for a given input $x$ called Fast Gradient Sign Method (FGSM):

\begin{equation}
    x_{adv} = x + \alpha sign({\nabla}_x\mathcal{L}(\theta, x, y))
\end{equation}

where $y$ is the true class of input $x$, $\theta$ is the model parameters, ${\nabla}_x\mathcal{L}(\theta, x, y)$ computes the gradient of a cost function with respect
to $x$. FGSM serves as a simple yet effective way of testing the robustness of a neural network.

From \cite{yu2019interpreting}, we know that models are not always robust -- the predictions of a model on clean and noisy inputs are not always the same and can diverge to a large extent with small adversarial noises. As such, given a feasible perturbation set, the average of Kullback–Leibler divergences between the original predictions and the adversarial predictions could be used to denote the model’s vulnerability degree (i.e., the inverses of model robustness). Thus, we can calculate the average Kullback–Leibler divergence between two predictions on original inputs and adversarial inputs with perturbations in a defined range. The formal robustness could be estimated by:

\begin{equation}
    \psi(x) = \frac{1}{{\rm avg} \,  D_{KL}(f(x), f(x_{adv}))}
\end{equation}

\subsection{Experiment Results} \label{sec:expe1}

We apply several settings of $\alpha$ for measuring the norm-bounded adversarial robustness. We use the average Kullback–Leibler divergence between two predictions on original inputs and adversarial inputs with perturbations to measure the norm-bounded adversarial robustness. For each setting of $\alpha$, we run experiments five times and calculate the average for the reason that local differential privacy is based on randomization.
Fig.\ref{fig:expe1} and Fig.\ref{fig:expe2} show the results when the dataset are MNIST and CIFAR10, respectively. 

From Fig.\ref{fig:expe1} and Fig.\ref{fig:expe2}, we can see that both results for MNIST and CIFAR10 agree with the expectation in section \ref{sec:eff}. While privacy degree decreases with epsilon $\epsilon$ (x-axis) increasing, adversarial robustness evaluated based on KL divergence (y-axis) firstly increases, access the balance area, then decreases. Note that in the middle range, we have a balance area since there induces mixed "good" and "bad" adversarial examples as showed in Fig.\ref{fig:adversarial} - c.

\begin{figure}[tb]
  \centering
  \includegraphics[width=0.5\textwidth]{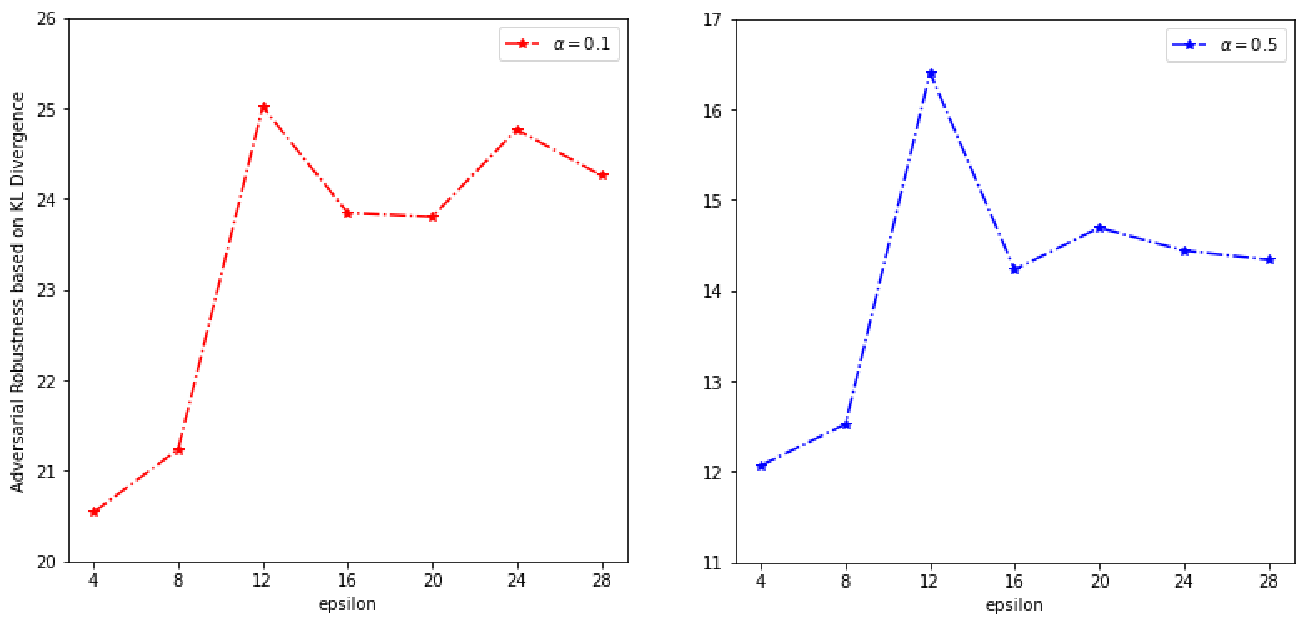}
  \caption{Experiment results for MNIST.}
  \label{fig:expe1}
\end{figure}

\begin{figure}[tb]
  \centering
  \includegraphics[width=0.5\textwidth]{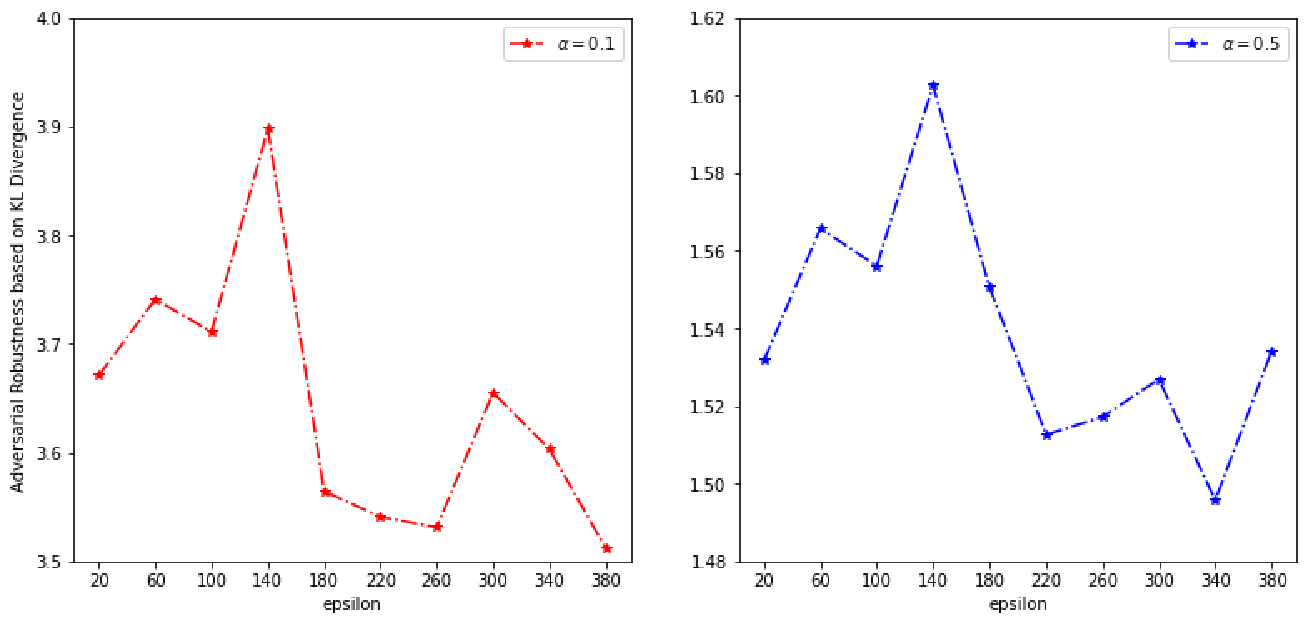}
  \caption{Experiment results for CIFAR10.}
  \label{fig:expe2}
\end{figure}

\section{CONCLUSION} \label{sec:Future Work}

This work uses theoretical analysis and empirical proof to study the exact effects of local differential privacy on adversarial robustness in federated learning. 
We find that local differential privacy has the potential for both positive and negative effects, which means that a higher level of local differential privacy (lower $\epsilon$) can bring higher or lower adversarial robustness.
For future work, we will further put forward more theoretical proof. We will also make a more comprehensive empirical plan.

\bibliographystyle{ACM-Reference-Format}
\bibliography{ref}


\end{document}